\newtheorem{theorem}{Theorem}
\newtheorem{corollary}{Corollary}
\newtheorem{lemma}{Lemma}
\newcounter{mycounter}
\newtheorem{counterexample}{Counterexample}
\begin{document}
	
	\title{\bf A note on the Diversity Owen value\thanks{This work is supported by the National Natural Science Foundation of China (No. 72371151)} }
	\author{Songtao He,\,  Erfang Shan\thanks{{\em Corresponding author}. E-mail addresses: efshan@shu.edu.cn (E. Shan), hesongtao@shu.edu.cn (S. He), 23720622@shu.edu.cn (Y. Sun)},\, Yuxin Sun\\
		{\small School of Management, Shanghai University, Shanghai 200444, P.R. China}}
	\date{}
	\maketitle \baselineskip 17pt
	
	\begin{abstract}
		
		B\'eal et al. (Int J Game Theory 54, 2025) introduce the Diversity Owen value for TU-games with diversity constraints, and provide  axiomatic characterizations using the axioms of fairness and balanced contributions. However, there exist logical flaws in the proofs of the uniqueness of these characterizations. In this note we provide the corrected proofs of the characterizations by introducing the null player for diverse games axiom. Also, we establish an alternative characterization of the Diversity Owen value by modifying the axioms of the above characterizations.
\bigskip
		
\noindent {\bf Keywords}: Diversity Owen value; Axiomatization
		
\medskip
		\noindent {\bf AMS (2000) subject classification:} 91A12
		
		\noindent {\bf JEL classification:} C71
	\end{abstract}
	
\section{Introduction, definitions and notation}\label{section2}
The Diversity Owen value for TU-games with diversity constraints is introduced and axiomatized by \cite{2025_beal2}.
In this note we correct the proofs of the uniqueness in the axiomatizations of the  value in \cite{2025_beal2} and provide an alternative characterization.
Throughout this note, we follow the definitions and notation in \cite{2025_beal2}. Here we only give the necessary terminology and notation. For more detailed explanations, see \cite{2025_beal2}.

	A cooperative game with transferable utility (TU-game) is a pair $(N, v)$ where $N$ is	a finite set of players and $v$ is a function assigning a worth $v(S)\in  \mathbb{R}$ to each coalition $S\subseteq N$ and such that $v(\emptyset) = 0$. A player $i$ is {\em null} in $v$ if $v(S \cup i)=v(S)$ for all $S \subseteq N \setminus \{i\}$. Two players $i$ and $j$ are {\em symmetric} in $v$ if $v(S \cup i) - v(S) = v(S \cup j) - v(S)$ for all $S \subseteq N \setminus \{i, j\}$.
	For nonempty $S \subseteq N$, $(S, v_{|S})$ is the subgame of $v$ to $S$ defined by $v_{|S}(T) = v(T)$ for all $T \subseteq S$. For every coalition $S\subseteq N$, the {\em unanimity game} $(N, u_S)$ is defined by $u_S(T) = 1$ if $S \subseteq T$ and $u_S(T) = 0$ otherwise.
	Any TU-game can be uniquely represented by unanimity games,
	\[v = \sum_{T \subseteq N, T \neq \emptyset} \Delta_T(v) u_T,\]
	where $\Delta_T(v) = \sum_{R\subseteq T} (-1)^{|T|-|R|} v(R)$ is called the {\em Harsanyi dividend} of the coalition $T$ \citep{1959_harsanyi_bar}.
	
	A cooperative game with a coalition structure is a triple $(N, v, \mathcal{B})$ where $(N, v)$ is a
	TU-game and $\mathcal{B} =\{B_1, B_2, \cdots, B_m\}$ is a coalition structure, i.e., $\cup_{k\in M}B_k=N$, where $M=\{1,2,\cdots\,m\}$. The elements of $\mathcal{B}$ are called components. For any $S\subseteq N$, we denote by $\mathcal{B}_{|S} =\{B_1 \cap S, \cdots, B_m \cap S\}$ the coalition
	structure on $S$ induced by $\mathcal{B}$. For any $S\subseteq N$, the subgame with a coalition structure
	on $S$ induced by $\mathcal{B}$ is $(S, v_{|S}, \mathcal{B}_{|S})$.

Recently, \cite{2025_beal1} extend TU-games with coalition structures to TU-games with diversity constraints.
They introduce a new allocation rule --
	the Diversity Owen value for TU-games with diversity constraints, which offers significant relevance in various economic and political contexts.
	
	A \textit{TU-game with diversity constraints} is a four-tuple $(N, v, \mathcal{B}, d)$,
	where $(N, v, \mathcal{B})$ is a TU-game with a coalition structure composed of communities,
	and $d = (d_1, \dots, d_m)$ specifies, for each $k \in \{1, \dots, m\}$, the minimal number $d_k \in \{1, \dots, |B_k|\}$
	of members of community $B_k$ that a coalition should contain to be considered as \emph{diverse}.
	
	Hence, a given coalition $S \subseteq N$ is \emph{diverse} if $|S \cap B_k| \geq d_k$ for each $k \in \{1, \dots, m\}$,
	and we denote by $D(N, \mathcal{B}, d)$ the set of all diverse coalitions.
	Denote by $GD$ the set of all TU-games with diversity constraints and a finite player set.
	
	A game $(N, v, \mathcal{B}, d)$ is called \emph{diverse} if $v(S) \neq 0$ implies that $S \in D(N, \mathcal{B}, d)$.
	The \emph{diversity-restricted game} associated to $(N, v, \mathcal{B}, d)$ is the game $(N, v^d, \mathcal{B}, d)$ such that
	\[
	v^d(S) =
	\begin{cases}
		v(S) & \text{if } S \in D(N, \mathcal{B}, d), \\
		0 & \text{otherwise}.
	\end{cases}
	\]
	By construction, $(N, v^d, \mathcal{B}, d)$ is a diverse game.
	
	A player $i \in B_k \in \mathcal{B}$ is  \emph{out} in a game $(N, v, \mathcal{B}, d)$ if its presence is not necessary to achieve the diversity requirement, i.e., $|B_k| - d_k \geq 1$.
	A game $(N, v, \mathcal{B}, d)$ is called \emph{$i$-out diverse}, $i \in B_k$, if it is diverse and $|B_k| - d_k \geq 1$. For any $S\subseteq N$, the subgame with diversity constraints of $(N, v, \mathcal{B}, d)$
	induced by $S$ is the game with diversity constraints $(S, v_{|S}, \mathcal{B}_{|S}, d)$.
	
%Given a game $(N, v, \mathcal{B}, d)\in GD$, a component $B_k\in \mathcal{B}$ and an outside player $l\notin N$ who wants to join $B_k$, the \textit{extended game} is defined as the tuple $(N \cup \{l\},(v)_{+l},\mathcal{B}_{+l},d)$ where $\mathcal{B}_{+l}=\{B_1, B_2, \cdots, B_{k'}, \cdots, B_m\}, B_{k'}=B_k\cup \{l\}$, and $(v)_{+l}(S)=v(S\setminus \{l\})$ for every $S\subseteq N\cup \{l\}.$
	
	Formally, for any $(N, v, \mathcal{B}, d)\in GD$, the \textbf{Diversity Owen value} is defined by
	\begin{center}
		$
		DOw(N, v, \mathcal{B}, d) = Ow(N, v^d, \mathcal{B}).
		$
	\end{center}
	
	We list the axioms in \cite{2025_beal2} used to characterize the Diversity Owen value.
	
	\textbf{Efficiency (E).} For each $(N, v, \mathcal{B}, d) \in GD$,
	$\sum_{i \in N} f_i(N, v, \mathcal{B}, d) = v(N)$.
	
	\textbf{Null Player Out for Preserving-Diversity Games (NPOPD).} If $(N, v, \mathcal{B}, d)$ is $i$-out diverse and player $i$ is null in $(N, v, \mathcal{B}, d)$, then $f_j(N, v, \mathcal{B}, d) = f_j(N \setminus \{i\}, v |_{N \setminus \{i\}}, \mathcal{B} |_{N \setminus \{i\}}, d)$ for each $j \in N \setminus \{i\}$.
	
	\textbf{Fairness within Component (FwC).} For all $(N, v, \mathcal{B}, d),\ (N, w, \mathcal{B}, d) \in GD$ and $i, j \in B_p \in \mathcal{B}$ symmetric in $(N, v)$, then
	\[ f_i(N, v + w, \mathcal{B}, d) - f_i(N, w, \mathcal{B}, d) = f_j(N, v + w, \mathcal{B}, d) - f_j(N, w, \mathcal{B}, d).\]
	
	\textbf{Fairness through Diversity (FD).} For all $(N, v, \mathcal{B}, d),\ (N, w, \mathcal{B}, d) \in GD$ and $B_p, B_q \in \mathcal{B}$,
	\[ \sum_{i \in B_p} f_i(N, v + w, \mathcal{B}, d) - \sum_{i \in B_p} f_i(N, w, \mathcal{B}, d)
	= \sum_{i \in B_q} f_i(N, v + w, \mathcal{B}, d) - \sum_{i \in B_q} f_i(N, w, \mathcal{B}, d). \]
	
	\textbf{Independence from Non-Diverse Coalitions (INDC).}
	For each $(N, v, \mathcal{B}, d),\ (N, w, \mathcal{B}, d) \in GD$ such that  $v(S) = w(S)$ for all $S \in D(N, \mathcal{B}, d)$,
	$f(N, v, \mathcal{B}, d) = f(N, w, \mathcal{B}, d).$
	
	\textbf{Equality through diversity (ED).}  ~For any  $(N, v, \mathcal{B}, d) \in GD$ and each $k,q\in M$,
	$\sum_{i\in B_k}f_i(N, v, \mathcal{B}, d)=\sum_{i\in B_q}f_i(N, v, \mathcal{B}, d)$.

	\textbf{Intra-coalitional balanced contributions with out players for preserving diversity (IBCOPPD).}  For any diverse game $(N, v, \mathcal{B}, d) \in GD$ and any out players $i,j\in B_p \in \mathcal{B}$ with $i\neq j$,
	$
	f_i(N, v, \mathcal{B}, d) - f_i(N\setminus \{j\}, v_{|N\setminus \{j\}}, \mathcal{B}_{|N\setminus \{j\}}, d)=f_j(N, v, \mathcal{B}, d) - f_j(N\setminus \{i\}, v_{|N\setminus \{i\}}, \mathcal{B}_{|N\setminus \{i\}}, d).
	$ Moreover, if player $i$ is null in $(N\setminus \{j\}, v_{|N\setminus \{j\}}, \mathcal{B}_{|N\setminus \{j\}}, d)$ then $f_i(N\setminus \{j\}, v_{|N\setminus \{j\}}, \mathcal{B}_{|N\setminus \{j\}}, d)=0$ (respectively, if $j$ is null in $(N\setminus \{i\}, v_{|N\setminus \{i\}}, \mathcal{B}_{|N\setminus \{i\}}, d)$ then $f_j(N\setminus \{i\}, v_{|N\setminus \{i\}}, \mathcal{B}_{|N\setminus \{i\}}, d)=0$).

\cite{2025_beal2} obtain the following main results on axiomatizations of the Diversity Owen value.
\begin{theorem}\label{thm1}
		(\citeauthor{2025_beal2} \citeyear{2025_beal2}, Proposition 2) The Diversity Owen value is the unique value on GD that satisfies Efficiency (\textbf{E}), Fairness within Component (\textbf{FwC}), Fairness through Diversity (\textbf{FD}), Independence from Non-Diverse Coalitions (\textbf{INDC}), and Null Player Out for Preserving-Diversity Games (\textbf{NPOPD}).
	\end{theorem}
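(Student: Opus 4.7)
The plan is to separate existence from uniqueness. Existence -- that DOw satisfies E, FwC, FD, INDC, and NPOPD -- is established in B\'eal et al.\ (2025), so I would simply cite it rather than re-verify each axiom.

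For uniqueness, fix an arbitrary value $f$ on $GD$ that obeys the five axioms, and aim for $f = DOw$. The first step is to invoke INDC and pass to diverse games: since $v$ and $v^d$ agree on every coalition in $D(N, \mathcal{B}, d)$, INDC yields $f(N, v, \mathcal{B}, d) = f(N, v^d, \mathcal{B}, d)$, so it is enough to determine $f$ on inputs of the form $(N, v^d, \mathcal{B}, d)$, i.e.\ on diverse games.

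The central and most delicate step is to establish a diverse-games analogue of the null-player property: if $(N, v, \mathcal{B}, d)$ is diverse and $i$ is null in $(N, v)$, then $f_i(N, v, \mathcal{B}, d) = 0$. When $i$ is out, this drops out at once from NPOPD combined with E: NPOPD pins down $f_j$ for $j \neq i$ in terms of the reduced game, and applying E to both $(N, v, \mathcal{B}, d)$ and $(N \setminus \{i\}, v|_{N \setminus i}, \mathcal{B}|_{N \setminus i}, d)$ together with $v(N) = v(N \setminus i)$ (since $i$ is null) forces $f_i = 0$. When $i$ is not out, i.e.\ $|B_k| = d_k$ for $i \in B_k$, NPOPD is silent on $i$ and a separate argument is required. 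This is precisely the case where the original proof has a gap, and I expect it to be the main obstacle. A plausible route is to construct an auxiliary diverse game in which $i$ can be treated via a symmetry or removal argument, and then push information back via FwC or FD applied to a carefully chosen pair $(v + w, w)$.

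Once the diverse null-player property is in hand, the remainder is a standard Owen-type induction. Since each diverse $v$ decomposes as $v = \sum_{T \in D(N, \mathcal{B}, d)} \Delta_T(v) u_T$ and each $u_T$ with $T$ diverse is itself a diverse game, it suffices to determine $f$ on every $\lambda u_T$. On $\lambda u_T$: the diverse null-player property assigns $0$ to every $i \notin T$; FwC equalizes payoffs of symmetric players within each $T \cap B_k$; FD equalizes the sums $\sum_{i \in B_p} f_i$ across the components $B_p$ meeting $T$; and E fixes the global total at $\lambda$. The fact that FwC and FD are formulated as differences between $(v + w)$ and $w$ supplies exactly the additivity needed to propagate these conclusions from each $\lambda u_T$ to an arbitrary diverse $v$ by induction on the number of nonzero Harsanyi dividends, yielding $f = DOw$.
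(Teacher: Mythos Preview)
Your outline has a genuine gap in the induction step, and it is the same gap that the paper is written to repair. You claim that once the diverse null-player property (call it ND) is in hand, ``the remainder is a standard Owen-type induction'' in which FwC and FD ``supply exactly the additivity needed.'' They do not. FwC is not an additivity axiom: it equates $f_i(v+w)-f_i(w)$ and $f_j(v+w)-f_j(w)$ only when $i,j$ are \emph{symmetric} in $v$. In the induction on $|\mathcal{I}(v^d)|$, peeling off a single piece $\Delta_{T}(v^d)u_{T}$ therefore links only players lying on the same side of $T$ inside their component; it gives no relation between $i\in T\cap B_k$ and $j\in B_k\setminus T$. Concretely, take $N=\{i,j,k,l\}$, $\mathcal{B}=\{\{i,j\},\{k,l\}\}$, $d=(1,1)$, and $v^d=\alpha\, u_{\{i,k\}}+\beta\, u_{\{j,l\}}$. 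Neither $i$ nor $j$ is null, and for every $T\in\mathcal{I}(v^d)$ exactly one of $i,j$ lies in $T$; FwC yields no equation linking $f_i(v^d)$ and $f_j(v^d)$, and the single relation $f_i+f_j=(\alpha+\beta)/2$ coming from FD and E does not pin them down. This is precisely Counterexample~\ref{counter1} in the paper, and it shows your ``standard'' induction cannot close.

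The paper's repair uses NPOPD a second time, inside the induction step: one adjoins a fresh null player $l$ to $B_k$, obtaining the diverse game $(N\cup\{l\},(v^d)_{+l},\mathcal{B}_{+l},d)$. For any $i\in B_k\setminus\mathcal{T}(v^d)$ there is some $T\in\mathcal{I}(v^d)$ with $i\notin T$, and then $i$ and $l$ are symmetric in $(\Delta_{v^d}(T)u_T)_{+l}$; one uses NPOPD to move between $N$ and $N\cup\{l\}$, ND to kill the $l$-term, FwC on the pair $(i,l)$, and the induction hypothesis on the game with one fewer dividend. You correctly anticipate an ``auxiliary game'' construction for establishing ND in the non-out case---and indeed the paper's Lemma~\ref{lemma1} does exactly this, adding a player to $B_k$ so that $i$ becomes out---but you did not foresee that the same add-a-player device is indispensable in the main induction. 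Relatedly, your diagnosis of the original gap is off: the flaw in B\'eal et al.\ lies in the induction step (their Subcases~1a and~1b), not in deriving ND.
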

\begin{theorem}\label{thm2}
		(\citeauthor{2025_beal2} \citeyear{2025_beal2}, Proposition 4) The Diversity Owen value is the unique value on $GD$ that satisfies Efficiency (\textbf{E}), Intra-coalitional balanced contributions with out players for preserving diversity (\textbf{IBCOPPD}), Equality through diversity(\textbf{ED}), and Independence from Non-Diverse Coalitions (\textbf{INDC}).
	\end{theorem}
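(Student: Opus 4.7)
The plan is to prove Theorem~\ref{thm2} in the standard two-step fashion: first show that the Diversity Owen value satisfies the four axioms, and then show that any value satisfying them coincides with it. The central technical tool is the reduction through INDC, which lets me replace any game $(N,v,\mathcal{B},d)$ by its diversity-restricted counterpart $(N,v^d,\mathcal{B},d)$ and hence confine the uniqueness argument to diverse games.

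For the existence direction I would exploit the identity $DOw(N,v,\mathcal{B},d)=Ow(N,v^d,\mathcal{B})$ and translate each axiom into a property of the standard Owen value applied to $v^d$. INDC is immediate because $v = w$ on diverse coalitions forces $v^d = w^d$. Efficiency holds since $N$ is always diverse (as $d_k\leq|B_k|$), so $v^d(N)=v(N)$ and $Ow$ is efficient. For ED I would use the Harsanyi decomposition $v^d = \sum_{T\text{ diverse}}\Delta_T(v^d)u_T$ and the observation that every diverse $T$ meets every component $B_k$, so the Owen value distributes $\Delta_T(v^d)/m$ to each component; ED then follows by linearity. For IBCOPPD the key identity is $(v_{|N\setminus\{j\}})^d=(v^d)_{|N\setminus\{j\}}$, which holds because a subset of $N\setminus\{j\}$ is diverse in the subgame iff it is diverse in the original game; this reduces the axiom to the well-known intra-coalitional balanced contributions and null-player properties of the Owen value.

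For uniqueness, let $f$ satisfy the four axioms. INDC gives $f(N,v,\mathcal{B},d)=f(N,v^d,\mathcal{B},d)$ and the same identity holds for $DOw$, so it suffices to prove $f=DOw$ on every diverse game. I would argue by induction on $|N|$. The case $|N|=1$ is handled by E alone. For the inductive step, E and ED together force $\sum_{i\in B_k}f_i(N,v,\mathcal{B},d)=v(N)/m$ for every component $B_k$. Inside a component $B_p$ with $|B_p|>d_p$, every $i\in B_p$ is an out player; applying IBCOPPD to any two distinct $i,j\in B_p$ expresses $f_i-f_j$ in terms of subgame values, and those subgame values are determined by the induction hypothesis because removing an out player from a diverse game leaves a diverse game with threshold $d_p$ still met. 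Combined with the component sum, this pins down every $f_i$ with $i\in B_p$.

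The delicate step is a component $B_p$ with $|B_p|=d_p\geq 2$: no player in $B_p$ is out, IBCOPPD is vacuous inside $B_p$, and yet the component sum still leaves $|B_p|-1$ degrees of freedom. This is precisely the case where the original proof contains the logical flaw mentioned in the abstract. The plan is to use the ``moreover'' clause of IBCOPPD together with the fact that in a diverse game with $|B_p|=d_p$, every coalition not containing the whole of $B_p$ has worth zero, so every $i\in B_p$ is null in any subgame obtained by removing a player from $B_p$. Chained with applications of IBCOPPD to out players in other components $B_q$ with $|B_q|>d_q$ (together with the induction hypothesis applied to the resulting subgames), this should produce enough equations to conclude that $f_i=v(N)/(m|B_p|)$ for every $i\in B_p$. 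Making this chain rigorous, and in particular handling the corner case in which no other component contains an out player, will be the main difficulty of the argument.
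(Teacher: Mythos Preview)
Your existence argument and the overall scaffolding of the uniqueness proof (reduction via \textbf{INDC} to diverse games, then the component-sum identity from \textbf{E}+\textbf{ED}, then \textbf{IBCOPPD} inside components with $|B_p|>d_p$) are fine and match the paper. The induction on $|N|$ is in fact slightly cleaner than the paper's induction on $|B_k|$, because your induction hypothesis automatically covers subgames in which the relevant component has become tight.

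The gap is in the tight case $|B_p|=d_p\ge 2$. Your plan there cannot work as stated, and not only in the ``corner case'' you flag. \textbf{IBCOPPD} is a strictly intra-component axiom: it is stated for two out players $i,j$ in the \emph{same} component $B_p$. Applying it to out players in some other component $B_q$ yields equations relating $f_i,f_j$ for $i,j\in B_q$ and their subgame values; it says nothing about any player in $B_p$. So ``chaining'' through other components never produces a single new equation inside $B_p$. Nor can you invoke the ``moreover'' clause by removing a player of $B_p$: when $|B_p|=d_p$ no player of $B_p$ is out, the hypothesis of \textbf{IBCOPPD} fails, and the subgame $(N\setminus\{j\},\ldots,d)$ with $j\in B_p$ is not even in $GD$ since $|B_p\setminus\{j\}|<d_p$. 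After \textbf{E}+\textbf{ED} you are left with $|B_p|-1$ undetermined degrees of freedom inside $B_p$ and no axiom that touches them.

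The paper closes this gap by moving in the opposite direction: instead of removing players, it \emph{enlarges} the tight component, adjoining a single null player $l$ to $B_p$ to form the extended diverse game $(N\cup\{l\},(v^d)_{+l},\mathcal{B}_{+l},d)$. In the extended game $|B_p\cup\{l\}|=d_p+1$, so every $i\in B_p$ and $l$ are out, and one can now apply \textbf{IBCOPPD} to the pairs $(i,l)$ and $(i,j)$ with $i,j\in B_p$. Because $|B_p|=d_p$, removing any $i\in B_p$ from the extended game collapses the restricted game to the zero game, so the ``moreover'' clause gives all the needed subgame values explicitly (equal to $0$). Combining these equations with the component-sum identity from Lemma~\ref{lemma2} first forces $f_l=0$ in the extended game, then $f_i(N\cup\{l\},\ldots)=f_j(N\cup\{l\},\ldots)$ for all $i,j\in B_p$, and finally $f_i(N\cup\{l\},\ldots)=f_i(N,v^d,\mathcal{B},d)$; this pins down every $f_i$ for $i\in B_p$. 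Your proof becomes correct once you replace the cross-component chaining idea with this enlargement trick.
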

	
However, we notice that there are some flaws in the proofs of the uniqueness of these two results in \cite{2025_beal2}.
In this note, we will identify the errors in the uniqueness proofs of both results and present the corrected proofs.
 Additionally, we introduce the null player for diverse games axiom, and  provide an alternative characterization of the Diversity Owen value by replacing the null player out for preserving-diversity games axiom used in Theorem \ref{thm1} with this axiom.
	\section{On Theorem \ref{thm1}}\label{section3}
Before  giving the corrected  proof of the uniqueness of Theorem \ref{thm1} in \cite{2025_beal2}, we first point out the errors  in the original proof of the uniqueness in Theorem \ref{thm1}.
	
	\begin{quote}
		\textbf{Subcase 1a.} ``Assume that $B_h = \mathcal{B}(i) = \{i, i'\}$. So, $d_h$ is necessarily equal to $1$.
		There exists a diverse coalition $T_1 \in \mathcal{T}_{v^d}$ such that $i' \notin T_1$.
		Moreover, $i \in T_1$ and $i' \in T_0$ since $T_1$ and $T_0$ are diverse.
		The players $i$ and $i'$ are symmetric in
		$
		(N,\, \Delta_{v^d}(T_0)u_{T_0} + \Delta_{v^d}(T_0)u_{T_1}).
		$ By applying \textbf{FwC} $\cdots$''
		
		\textbf{Subcase 1b.} ``$\cdots$Recall that $T_0 \neq T_2$. So there are at least \( |\mathcal{B}(i)| - 1 \) linearly independent equations in Eqs. (3), (4), and (5).
		Combining these equations with Eq.(1) leads to
		$f_i(N, v^d, \mathcal{B}, d) = Ow_i(N, v^d, \mathcal{B}).$''
		
	\end{quote}
	
	In Subcase 1a, it is possible that the players \( i \) and \( i' \) are not symmetric in $
	(N,\, \Delta_{v^d}(T_0)u_{T_0} + \Delta_{v^d}(T_0)u_{T_1})
	$.  This situation will cause the application of the {\bf FwC} axiom to fail. In Subcase 1b, if $T_0 \cap \mathcal{B}(i)=\mathcal{B}(i)\setminus T_2$,
In this case, it would be impossible to construct
\( |\mathcal{B}(i)| - 1 \)
 linearly independent systems of equations.

The following counterexamples illustrate those mentioned above.
	\begin{counterexample}\label{counter1}
		Consider a game \((N, v, \mathcal{B}, d) \in GD\), where \(N = \{i, j, k, l\}\), \(\mathcal{B} = \{B_h , B_q\} \), \(B_h = \{i, j\} \), \( B_q = \{k, l\}\), and \(v = \Delta_v(\{i,k\}) u_{\{i,k\}} + \Delta_v(\{j,l\}) u_{\{j,l\}}\), with diversity requirement \(d = (1,1)\). It is easy to verify that this game satisfies the conditions of Subcase 1a of Case 1 in \cite{2025_beal2}, i.e., \(|B_h| = 2\) and \(\mathcal{T}(v^d) = \emptyset\).
		
	\end{counterexample}
	Define 	$w = \Delta_v(\{i,k\}) u_{\{i,k\}} + \Delta_v(\{i,k\}) u_{\{j,l\}}$. Obviously, $i$ and $j$ are not symmetric in $w$.
	
In fact, the method in \cite{2002_Hu} cited by the authors are also wrong.  We construct the game $w'$ to address this. Let
    $w' = \Delta_{v}(\{j,l\}) u_{\{j,l\}} + \Delta_{v}(\{j,l\}) u_{\{i,l\}}.
	$
	It is easy to see that \( i \) and \( j \) are symmetric in $w'$. However, since \( |\mathcal{I}(v - w')|=|\mathcal{I}(v)| \), it is not possible to apply induction to obtain
	$
	f_i(N, v, \mathcal{B},d) - f_i(N, v, \mathcal{B},d) = DOw_i(N, v, \mathcal{B},d) - DOw_i(N, v, \mathcal{B},d).
	$

	\begin{counterexample}
		Consider a game \((N, v, \mathcal{B}, d) \in GD\), where \(N = \{i, j, k, l\}\), \(\mathcal{B} = \{B_h, B_q\}\), \( B_h = \{i, j, k\}\), \( B_q = \{l\} \), and \(v = \Delta_v(\{i\}) u_{\{i\}} + \Delta_v(\{j,k,l\}) u_{\{j,k,l\}}\), with diversity requirement \(d = (1,1)\). It is easy to verify that this game satisfies the conditions of Subcase 1b of Case 1 in \cite{2025_beal2}, i.e., \(|B_h| > 2\) and \(\mathcal{T}(v^d) = \emptyset\).
	\end{counterexample}
	We can only conclude that $j$ and $k$ are symmetric in both $\Delta_v(\{j,k,t\}) u_{\{j,k,t\}}$ and $\Delta_v(\{i\}) u_{\{i\}}$, which is insufficient to obtain at least $|B_h|-1$ linearly independent equations.
	In fact, in the process of \citeauthor{2002_Hu}'s characterization of the Owen value using fairness axioms, a similar issue arises at the player level. It is also impossible to derive $|B_h|-1$ linearly independent equations.

	\subsection{The corrected proof of the uniqueness of Theorem \ref{thm1}}
	In this subsection we give the corrected  proof of the uniqueness of Theorem \ref{thm1} in \cite{2025_beal2}. For this purpose, we first introduce the following axiom and lemma.
	
	\textbf{Null Player for Diverse Games (ND).} If $(N, v, \mathcal{B}, d)\in GD$ is diverse and player $i\in N$ is a null player in $(N, v, \mathcal{B}, d)$, then $f_i(N, v, \mathcal{B}, d) =0$.
	
	\begin{lemma}\label{lemma1}
		{\bf E} and {\bf NPOPD} imply {\bf ND}
	\end{lemma}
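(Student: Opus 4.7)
My plan is to combine NPOPD with the Efficiency axiom applied both to the full game and to the subgame obtained by removing $i$. The core identity I aim to extract is $\sum_{j \neq i} f_j(N, v, \mathcal{B}, d) = v(N \setminus \{i\})$ via NPOPD plus subgame Efficiency, while full-game Efficiency gives $\sum_{j \in N} f_j(N, v, \mathcal{B}, d) = v(N)$; nullness of $i$ (so $v(N) = v(N \setminus \{i\})$) then pins down $f_i = 0$. A single case distinction, according to whether $i$ is itself an out player, will be needed.

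First, I would cover the case where $i$ is an out player, i.e., $|B_k| - d_k \geq 1$ with $i \in B_k$. Then the game is $i$-out diverse, so NPOPD yields $f_j(N, v, \mathcal{B}, d) = f_j(N \setminus \{i\}, v|_{N \setminus \{i\}}, \mathcal{B}|_{N \setminus \{i\}}, d)$ for every $j \neq i$. Summing over $j \neq i$ and invoking Efficiency on the two games gives the two identities above; subtracting and using $v(N) = v(N \setminus \{i\})$ (from nullness of $i$) produces $f_i(N, v, \mathcal{B}, d) = 0$.

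The residual case $|B_k| = d_k$, in which $i$ is not out, I would handle by an enlargement trick. I would first observe that $v \equiv 0$: any diverse coalition must contain all of $B_k$ and hence $i$, so diversity of the game forces $v(S) = 0$ for $S \not\ni i$, and nullness of $i$ extends this to $S \ni i$ via $v(S) = v(S \setminus \{i\}) = 0$. I would then embed the game into a larger one by adjoining a fresh player $j$ to $B_k$ and taking the worth function identically zero, obtaining a game $(N \cup \{j\}, v', \mathcal{B}', d) \in GD$ with $v' \equiv 0$ and $\mathcal{B}' = (\mathcal{B} \setminus \{B_k\}) \cup \{B_k \cup \{j\}\}$. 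In this enlarged game $|B_k \cup \{j\}| = d_k + 1$, so $i$ is now out and the previous case applies to give $f_i(N \cup \{j\}, v', \mathcal{B}', d) = 0$. Since $j$ is also null and out in the enlarged game, NPOPD applied to $j$ equates $f_i$ on the enlarged game with $f_i$ on $(N, v, \mathcal{B}, d)$, yielding $f_i(N, v, \mathcal{B}, d) = 0$.

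The main obstacle I anticipate is the bookkeeping in this enlargement step: I must check that $(N \cup \{j\}, v', \mathcal{B}', d)$ is a legitimate element of $GD$, that $j$ fulfils the "out" and "null" hypotheses of NPOPD inside it, and that the NPOPD-subgame obtained by removing $j$ coincides exactly with $(N, v, \mathcal{B}, d)$ (which rests on $v \equiv 0$ in the non-out case). Once these verifications are in place, the two-step reduction — enlarge, apply the out case, then contract — closes the proof without any circular use of the axioms.
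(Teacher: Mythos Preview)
Your proof is correct and rests on the same key device as the paper's: adjoin a fresh dummy player to $B_k$ so that both the dummy and $i$ become out players, then combine \textbf{NPOPD} with \textbf{E} on the enlarged game and its subgame. The organization differs, however. The paper treats all cases uniformly: it always enlarges via $(v)_{+l}(S)=v(S\setminus\{l\})$, then applies \textbf{NPOPD} once with $l$ (to identify $f_i$ on the enlarged game with $f_i(N,v,\mathcal{B},d)$) and once with $i$ (to evaluate $\sum_{j\neq i}f_j$ via \textbf{E} on the subgame $(N\cup\{l\})\setminus\{i\}$). You instead split cases: when $i$ is already out you bypass the enlargement entirely and apply \textbf{NPOPD} to $i$ directly in $(N,v,\mathcal{B},d)$, which is slightly slicker; only when $|B_k|=d_k$ do you enlarge, and there your preliminary observation that $v\equiv 0$ lets you take the trivial extension $v'\equiv 0$ rather than the general $(v)_{+l}$. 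Your route is a bit more elementary in the out case, while the paper's uniform extension avoids the case distinction and the auxiliary $v\equiv 0$ argument; both reach the same conclusion with essentially the same ingredients.
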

	
	\begin{proof}
		 Let $f$ be a value on $GD$ that satisfies {\bf E} and {\bf NPOPD}. For any diverse game $(N, v, \mathcal{B}, d) \in GD$, player $i \in B_k\in \mathcal{B}$ is a null player in $v$, we show that $f_i(N,v,\mathcal{B},d)=0$.
		
Take an outside player $l\notin N$, and we define the extended game $(N \cup \{l\},(v)_{+l},\mathcal{B}_{+l},d)$  by
\begin{align*}
&(v)_{+l}(S)=v(S\setminus \{l\}) \ \ \ \mbox{for all $S\subseteq N\cup \{l\}$},\\
 &\mathcal{B}_{+l}=\{B_1, B_2, \cdots, B_{k'}, \cdots, B_m\}, \ \ \mbox{where $B_{k'}=B_k\cup \{l\}$.}
 \end{align*}
%		Take a player $l\notin N$ an outside
%		player who wants to join the community $B_k$, and define $B_{k'}=B_k\cup \{l\}$. Then, define a new game $(N \cup \{l\},(v)_{+l},\mathcal{B}_{+l},d)$ where
%		$\mathcal{B}_{+l}=\{B_1, B_2, \cdots, B_{k'}, \cdots, B_m\},$ and $(v)_{+l}(S)=v(S\setminus \{l\})$ for every $S\subseteq N\cup \{l\}.$
		Note that $l$ is null in $(v)_{+l}$. 	For any $S\subseteq N\cup \{l\}$, if $(v)_{+l}(S)\neq 0$, then $v(S\setminus \{l\})\neq 0$.  Thus $S\setminus \{l\} \in D(N,\mathcal{B},d)$, i.e., $|S\setminus \{l\} \cap B_{p} |\ge d_p$ for all $B_p \in \mathcal{B}$. Since  $|S\cap B_{p} |\geq| S\setminus \{l\} \cap B_{p} |\ge d_p$ for all $B_{p}\in \mathcal{B}_{+l}$,  the game $(N \cup \{l\},(v)_{+l},\mathcal{B}_{+l},d)$ is also a diverse game.
		Combining $|B_{k'}|-d_k\ge |B_{k'}|-|B_{k}|\ge 1$, it follows that  $(N \cup \{l\},(v)_{+l},\mathcal{B}_{+l},d)$ is $i$-out diverse and $l$-out diverse.
		
		Thus, by {\bf NPOPD} and {\bf E}, we have
		\begin{eqnarray*}
			v(N)&=&(v)_{+l}(N\cup \{l\})\\
&\stackrel{{\bf E}}{=}&f_i(N \cup \{l\},(v)_{+l},\mathcal{B}_{+l},d)+\sum_{j \in N \backslash \{i\}}f_j(N \cup \{l\},(v)_{+l},\mathcal{B}_{+l},d)\nonumber \\ &\stackrel{{\bf NPOPD}}{=}&f_i(N,v,\mathcal{B},d)+\sum_{j \in N \backslash \{i\}}f_j\big((N \cup \{l\})\setminus \{i\},{(v)_{+l}}_{|(N\cup \{l\} )\setminus \{i\}},{\mathcal{B}_{+l}}_{|(N\cup \{l\}) \setminus \{i\}},d \big) \nonumber \\
			&\stackrel{{\bf E}}{=}&f_i(N,v,\mathcal{B},d)+(v)_{+l}\big((N\cup \{l\}) \setminus \{i\}\big) \nonumber \\
			&=&f_i(N,v,\mathcal{B},d)+v( N\setminus \{i\}).
		\end{eqnarray*}
		Hence, $f_i(N,v,\mathcal{B},d)=0$.
	\end{proof}
	
	We now provide a corrected  proof of the uniqueness in Theorem \ref{thm1}.
	
	\begin{proof}[Proof of Theorem~\ref{thm1}]
	In order to prove the uniqueness part,	let \( f \) be a value on \( GD \) that satisfies the five axioms.  We show that \( f(N, v, \mathcal{B}, d) = DOw(N, v, \mathcal{B}, d) \) for any $(N, v, \mathcal{B}, d) \in GD$. By Lemma \ref{lemma1}, $f$ satisfies {\bf ND}. By {\bf INDC}, we have \( f(N, v, \mathcal{B}, d) = f(N, v^d, \mathcal{B}, d)\) and \( DOw(N, v, \mathcal{B}, d) = DOw(N, v^d, \mathcal{B}, d)\). Thus, it suffices to show that  $f(N, v^d, \mathcal{B}, d)=$ $ DOw(N, v^d, \mathcal{B}, d)$.
		Set
		$\mathcal{I}(v^d) := \{ T \in D(N, \mathcal{B}, d) : \Delta_{v^d}(T) \neq 0 \}.$
		We show that $f(N, v^d, \mathcal{B}, d) = DOw(N, v^d, \mathcal{B}, d)$ by induction on \( |\mathcal{I}(v^d)|  \).
		
		{\em Induction basis}: If $|\mathcal{I}(v^d) |=0$, then each $i \in N$ is null in $v^d$. Since $v^d$ is a diverse game, by {\bf ND}, we have $f_i(N,v^d,\mathcal{B},d)=0=DOw_i(N,v^d,\mathcal{B},d)$ for any $i \in N$.
		
		{\em Induction hypothesis }({\bf IH}): Assume that $f(N,v^d,\mathcal{B},d)=DOw(N,v^d,\mathcal{B},d)$ for any $(N, v, \mathcal{B}, d)$ with $|\mathcal{I}(v^d)| < m$.
		
		{\em Induction step }: Let us show that the assertion holds for any $(N, v, \mathcal{B}, d)$ with $|\mathcal{I}(v^d)| = m$.
		
		We first show that $\sum_{i \in B_k}f_i(N,v^d,\mathcal{B},d)=\sum_{i \in B_k}DOw_i(N,v^d,\mathcal{B},d)$ for any $B_k \in \mathcal{B}$. By {\bf FD}, for any $B_p,B_q \in \mathcal{B}$, we have
		\[\sum_{i \in B_p}f_i(N,v^d,\mathcal{B},d)-\sum_{i \in B_q}f_i(N,v^d,\mathcal{B},d)\stackrel{{\bf FD}}{=}\sum_{i \in B_p}f_i(N,{\bf 0},\mathcal{B},d)-\sum_{i \in B_q}f_i(N,{\bf 0},\mathcal{B},d)=0.\]
		Hence,
		$\sum_{i \in B_p}f_i(N,v^d,\mathcal{B},d)=\sum_{i \in B_q}f_i(N,v^d,\mathcal{B},d)$.
		Similarly, we have
		$\sum_{i \in B_p}DOw_i(N,v^d,\mathcal{B},d)=\sum_{i \in B_q}DOw_i(N,v^d,\mathcal{B},d).$
		Combining {\bf E}, we obtain that,
		\begin{equation}\label{eq1}
			\sum_{i \in B_k}f_i(N,v^d,\mathcal{B},d)=\sum_{i \in B_k}DOw_i(N,v^d,\mathcal{B},d), \text{~for any~} B_k \in \mathcal{B}.
		\end{equation}
		
		%    Next let us return to our initial goal, that is, to show that
		%    $f(N, v^d, \mathcal{B}, d) = DOw(N, v^d, \mathcal{B}, d)$.
		We next show that $f_i(N, v^d, \mathcal{B}, d) = DOw_i(N, v^d, \mathcal{B}, d)$ for all $i \in B_k$. For any $B_k \in \mathcal{B}$, if $|B_k|=1$, then by \eqref{eq1}, we have $f_i(N, v^d, \mathcal{B}, d) = DOw_i(N, v^d, \mathcal{B}, d)$ for any $i \in B_k$.
		Otherwise, $|B_k| \geq 2$. We set $\mathcal{T}(v^d) = \{i \in N : i \in T \text{ for all } T \in \mathcal{I}(v^d)\}$.
		
		For any $i \in B_k \backslash \mathcal{T}(v^d)$, there exists $T \in \mathcal{I}(v^d)$ such that $i \notin T$.
		As in Lemma \ref{lemma1}, we construct the  game $(N \cup \{l\},(v)_{+l},\mathcal{B}_{+l},d)$ by adding an outside player $l\notin N$ to $B_k$.
%		Take a player $l \notin N$ an outside player who wants to join the community $B_k$ and define $B_{k'}=B_k \cup l$. Then, define a new game $(N \cup \{l\},(v^d)_{+l},\mathcal{B}_{+l},d)$ where $(v^d)_{+l}(S)=(v^d)(S\setminus \{l\})$ for every $S\subseteq N\cup \{l\}$, and $\mathcal{B}_{+l}=\{B_1, B_2, \cdots, B_{k'}, \cdots, B_m\}$.
		Note that $l$ is null in $(v^d)_{+l}$ and $(N \cup \{l\},(v^d)_{+l},\mathcal{B}_{+l},d)$ is a diverse game. Moreover,  $i$ and $l$ are symmetric in $(\Delta_{v^d}(T)u_T)_{+l}$. Let $(w^d)_{+l}= (v^d)_{+l}-(\Delta_{v^d}(T)u_T)_{+l}$. Note that $(w^d)_{+l}(S)=(w^d)(S\setminus \{l\})$ for every $S\subseteq N\cup \{l\}$. By {\bf NPOPD}, {\bf ND}, and {\bf FwC},  we have
		\begin{eqnarray}\label{eq2}
			f_i(N, v^d, \mathcal{B}, d)&\stackrel{{\bf NPOPD}}{=}&f_i(N \cup \{l\},(v^d)_{+l},\mathcal{B}_{+l},d) \nonumber \\
			&\stackrel{{\bf ND}}{=}&
			f_i(N \cup \{l\},(v^d)_{+l},\mathcal{B}_{+l},d)-f_l(N \cup \{l\},(v^d)_{+l},\mathcal{B}_{+l},d) \nonumber \\
			&\stackrel{{\bf FwC}}{=}&
			f_i(N \cup \{l\},(w^d)_{+l},\mathcal{B}_{+l},d)-f_l(N \cup \{l\},(w^d)_{+l},\mathcal{B}_{+l},d)
			\nonumber \\
			&\stackrel{{\bf IH}}{=}&
			DOw_i(N \cup \{l\},(w^d)_{+l},\mathcal{B}_{+l},d)-DOw_l(N \cup \{l\},(w^d)_{+l},\mathcal{B}_{+l},d)
			\nonumber \\
			&\stackrel{{\bf FwC}}{=}&
			DOw_i(N \cup \{l\},(v^d)_{+l},\mathcal{B}_{+l},d)-DOw_l(N \cup \{l\},(v^d)_{+l},\mathcal{B}_{+l},d)
			\nonumber \\
			&\stackrel{{\bf ND}}{=}&
			DOw_i(N \cup \{l\},(v^d)_{+l},\mathcal{B}_{+l},d)
			\nonumber \\
			&\stackrel{{\bf NPOPD}}{=}&
			DOw_i(N, v^d, \mathcal{B}, d).
		\end{eqnarray}
Hence, $f_i(N, v^d, \mathcal{B}, d)=DOw_i(N, v^d, \mathcal{B}, d)$ for all  $i \in B_k \backslash \mathcal{T}(v^d)$.
		
		For any $i \in B_k \cap \mathcal{T}(v^d)$, if $|B_k \cap \mathcal{T}(v^d)|=1$, by (\ref{eq1}), (\ref{eq2}), we immediately obtain
		\[
			f_i(N, v^d, \mathcal{B}, d)=f_i(N, v^d, \mathcal{B}, d), \text{ for all } i \in B_k.
		\]
		Otherwise, $|B_k \cap \mathcal{T}(v^d)|\ge 2$. Note that all  players  $i,j \in B_k \cap \mathcal{T}(v^d)$ are symmetric in $v^d$, and by {\bf FwC}, we have
		\begin{eqnarray*}
			f_i(N, v^d, \mathcal{B}, d) - f_j(N, v^d, \mathcal{B}, d)&\stackrel{{\bf FwC}}{=}&f_i(N, {\bf 0}, \mathcal{B}, d) - f_j(N, {\bf 0}, \mathcal{B}, d)
			\nonumber \\
			&\stackrel{{\bf IH }}{=}&
			DOw_i(N, {\bf 0}, \mathcal{B}, d) - DOw_j(N, {\bf 0}, \mathcal{B}, d)
			\nonumber \\
			&\stackrel{{\bf FwC }}{=}&
			DOw_i(N, v^d, \mathcal{B}, d) - DOw_j(N, v^d, \mathcal{B}, d).
		\end{eqnarray*}
		Hence, $f_i(N, v^d, \mathcal{B}, d) - DOw_i(N, v^d, \mathcal{B}, d)= f_j(N, v^d, \mathcal{B}, d)- DOw_j(N, v^d, \mathcal{B}, d)$ for any $i,j \in B_k \cap \mathcal{T}(v^d)$.
		Together with (\ref{eq1}) and (\ref{eq2}), this implies that
		\begin{equation*}\label{eq4}
			f_i(N, v^d, \mathcal{B}, d)=DOw_i(N, v^d, \mathcal{B}, d), \text{ for any }  i \in B_k \cap \mathcal{T}(v^d).
		\end{equation*}
 Thus, we have $f_i(N, v^d, \mathcal{B}, d)=DOw_i(N, v^d, \mathcal{B}, d)$ for all $i \in B_k$.
  Consequently, we deduce that $f(N, v, \mathcal{B}, d)=DOw(N, v, \mathcal{B}, d)$.
	\end{proof}
	
	\subsection{ A modification of Proposition 2}
	In this section, we present an alternative characterization by the axiom of Null Player Out for Preserving-Diversity Games  ({\bf NPOPD}) in Theorem \ref{thm1}  with the  axiom of Null Player for Diverse Games  ({\bf ND}).
	
	\begin{theorem}\label{theorem3}
		The Diversity Owen value is the unique value on $GD$ that satisfies Efficiency (E), Fairness within Component (\textbf{FwC}), Fairness through Diversity (\textbf{FD}), Independence from Non-Diverse Coalitions (\textbf{INDC}), and Null Player for Diverse Games (\textbf{ND}).
	\end{theorem}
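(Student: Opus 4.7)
The plan is to emulate the corrected proof of Theorem~\ref{thm1}, but with the ``adjoin an outside null player $l$ to $B_k$'' construction — which crucially invoked \textbf{NPOPD} — replaced by an argument that stays inside the original player set $N$ and uses \textbf{ND} directly on carefully chosen sub-games. By \textbf{INDC} it suffices to prove $f(N,v^d,\mathcal{B},d)=DOw(N,v^d,\mathcal{B},d)$ on diverse games, and I will induct on $|\mathcal{I}(v^d)|$. The base case $|\mathcal{I}(v^d)|=0$ is immediate from \textbf{ND}. In the inductive step, \textbf{FD} with $w=\mathbf{0}$, together with \textbf{E} and $f(N,\mathbf{0},\mathcal{B},d)=\mathbf{0}$ (from \textbf{ND}), yields the equal block sums $\sum_{i\in B_k}f_i(N,v^d,\mathcal{B},d)=v^d(N)/m=\sum_{i\in B_k}DOw_i(N,v^d,\mathcal{B},d)$ for every $B_k\in\mathcal{B}$, so it suffices to show that $R_i:=f_i(N,v^d,\mathcal{B},d)-DOw_i(N,v^d,\mathcal{B},d)$ is constant on each block.

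For each block $B_k$ with $|B_k|\ge 2$ and each pair $i,j\in B_k$, I split into two cases. If some $T\in\mathcal{I}(v^d)$ contains both or neither of $i,j$, then $i,j$ are symmetric in $\Delta_{v^d}(T)u_T$; setting $w^d=v^d-\Delta_{v^d}(T)u_T$ gives $|\mathcal{I}(w^d)|<|\mathcal{I}(v^d)|$, and combining \textbf{FwC} with the inductive hypothesis on $w^d$ and the \textbf{FwC} property of $DOw$ produces $R_i=R_j$. Otherwise every $T\in\mathcal{I}(v^d)$ separates $\{i,j\}$ (the ``hard'' case); here let $g_+:=\sum_{T:\,i\in T,\,j\notin T}\Delta_{v^d}(T)u_T$, $v_{FwC}:=g_++\sigma_{ij}(g_+)$, and $w_{FwC}:=v^d-v_{FwC}$. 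Then $v_{FwC}$ is $\sigma_{ij}$-invariant, so $i,j$ are symmetric in it; and $w_{FwC}$ is a diverse game whose nonzero Harsanyi coefficients sit on coalitions containing $j$ but not $i$, making $i$ null in $w_{FwC}$. By \textbf{ND}, $f_i(w_{FwC})=0=DOw_i(w_{FwC})$, and \textbf{FwC} applied with $(v_{FwC},w_{FwC})$ reduces the equality $R_i=R_j$ to the single identity $f_j(w_{FwC})=DOw_j(w_{FwC})$.

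The hard part will be this last identity, because a ``truly hard'' $v^d$ — one in which every $T\in\mathcal{I}(v^d)$ with $i\in T,\,j\notin T$ has its $\sigma_{ij}$-swap outside $\mathcal{I}(v^d)$ — can leave $|\mathcal{I}(w_{FwC})|=|\mathcal{I}(v^d)|$, so the inductive hypothesis on $w_{FwC}$ is unavailable. The remedy exploits the fact that $i$ is null in $w_{FwC}$. When $|B_k|=2$, the block-sum identity on $w_{FwC}$ together with $f_i(w_{FwC})=0$ immediately pins down $f_j(w_{FwC})=w_{FwC}(N)/m$, and the same formula holds for $DOw_j(w_{FwC})$, settling this case. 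When $|B_k|\ge 3$, a pigeonhole argument (no single coalition can separate all three pairs within any triple in $B_k$) shows that for any triple $\{i,j,\ell\}\subseteq B_k$ at least one of the pairs $(i,\ell),(j,\ell)$ is non-hard and hence already handled by Case~A, so $R_i=R_\ell$ or $R_j=R_\ell$; chaining such equalities with a secondary induction on the number of non-null players (each recursion step either strictly reduces $|\mathcal{I}|$ or strictly enlarges the set of null players in the running game) terminates and gives $f_j(w_{FwC})=DOw_j(w_{FwC})$. Combining $R_i=R_j$ for all pairs in $B_k$ with the equal block sums then forces $R_i=0$ throughout $B_k$, completing the induction.
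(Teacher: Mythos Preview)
Your overall architecture matches the paper: reduce to diverse games via \textbf{INDC}, induct on $|\mathcal{I}(v^d)|$, get block sums from \textbf{FD}+\textbf{E}+\textbf{ND}, then show the residual $R_i=f_i-DOw_i$ is constant on each $B_k$. Your Case~A is correct, and your $|B_k|=2$ sub-case of Case~B is clean and correct. Moreover, your auxiliary game $w_{FwC}=v^d-\bigl(g_++\sigma_{ij}(g_+)\bigr)$ is \emph{exactly} the paper's $(v')^d$ in its Case~3, so you have found the right construction.

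The gap is in Case~B with $|B_k|\ge 3$. Two issues. First, the pigeonhole observation is true but does not do what you need: knowing that for each $\ell$ either $R_i=R_\ell$ or $R_j=R_\ell$ only tells you every $R_\ell$ equals one of two values, and together with $\sum_{\ell\in B_k}R_\ell=0$ this does \emph{not} force $R_i=R_j$; there is no ``chain'' from $i$ to $j$ in the non-hard graph in general (the paper's Case~3 is precisely the situation where this graph is disconnected). Second, the ``secondary induction on the number of non-null players'' is the right remedy but is not stated carefully enough. For it to work you must (a) restrict the Case~A/Case~B dichotomy to \emph{non-null} pairs (otherwise $\mathcal{I}_i=\emptyset$ makes $g_+=0$ and $w_{FwC}=v^d$, so nothing decreases), and (b) set up the induction lexicographically on $\bigl(|\mathcal{I}(g)|,\ \#\{\text{non-null players in }g\}\bigr)$ and verify that $|\mathcal{I}(w_{FwC})|\le|\mathcal{I}(v^d)|$ while $i$ is a \emph{new} null player in $w_{FwC}$ (and no player becomes newly non-null). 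With those two fixes, your lex-induction gives $f(w_{FwC})=DOw(w_{FwC})$ outright, and then ND on the null players plus constancy of $R$ on the non-null players and the block sum finish the argument.

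The paper avoids this secondary induction altogether. It organizes the induction step by the family $P(\mathcal{I}(v^d))=\{B_p\cap T:T\in\mathcal{I}(v^d)\}$ and proves three cases in order: $|P|=1$ (symmetry plus \textbf{ND}); $|P|\ge 2$ with some overlap or gap (remove a single unanimity term and use the primary IH); and the partition case $P=\{S_1,S_2\}$, $S_1\sqcup S_2=B_p$. In the last case it forms the same game $(v')^d=w_{FwC}$ and simply observes that $P(\mathcal{I}((v')^d))$ now contains $S_2$ and $(S_1\setminus\{i\})\cup\{j\}$, which intersect in $j$; hence $(v')^d$ lands in the already-proven overlap case, and one application of \textbf{FwC} finishes. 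This is cleaner than threading a second induction parameter through the proof.
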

	
	\begin{proof}
		{\em Existence}: According to \cite{2025_beal2} and Lemma \ref{lemma1}, one can easily check that the Diversity Owen value satisfies the five axioms of Theorem \ref{theorem3}.
		
		{\em Uniqueness}: Let \( f \) be a value on \( GD \) that satisfies the five axioms. We  show that \( f(N, v, \mathcal{B}, d) = DOw(N, v, \mathcal{B}, d) \)  for any $(N, v, \mathcal{B}, d)$. By {\bf INDC},  \( f(N, v, \mathcal{B}, d) = f(N, v^d, \mathcal{B}, d)\) and \( DOw(N, v, \mathcal{B}, d)$ $= DOw(N, v^d, \mathcal{B}, d)\). Thus it suffices to show that  $f(N, v^d, \mathcal{B}, d) = DOw(N, v^d, \mathcal{B}, d)$.
		Set
		\( \mathcal{I}(v^d) := \{ T \in D(N, \mathcal{B}, d) : \Delta_{v^d}(T) \neq 0 \} \). We show that $f(N, v^d, \mathcal{B}, d) = DOw(N, v^d, \mathcal{B}, d)$ by induction on \( |\mathcal{I}(v^d)|  \).
		
		{\em Induction basis}: If $|\mathcal{I}(v^d) |=0$, then every $i \in N$ is null in $v^d$. Since $v^d$ is a diverse game, by {\bf ND}, we have $f_i(N,v^d,\mathcal{B},d)=0=DOw_i(N,v^d,\mathcal{B},d)$ for all $i \in N$.
		
		{\em Induction hypothesis }({\bf IH}): Assume that $f(N, v^d, \mathcal{B}, d) = DOw(N, v^d, \mathcal{B}, d)$ for any  $(N, v, \mathcal{B}, d)$ when $|\mathcal{I}(v^d)| < m$.
		
		{\em Induction step }: Let us show that $f(N, v^d, \mathcal{B}, d) = DOw(N, v^d, \mathcal{B}, d)$ for any  $(N, v, \mathcal{B}, d)$ with $|\mathcal{I}(v^d)| = m$. By \textbf{FD} and \textbf{E},  we have
		\begin{equation}\label{eq5}
			\sum_{i \in B_p}f_i(N,v^d,\mathcal{B},d)=\sum_{i \in B_p}DOw_i(N,v^d,\mathcal{B},d),~ \text{for any } B_p \in \mathcal{B}.
		\end{equation}
		
		It remains to show that \( f_i(N, v^d, \mathcal{B}, d) = DOw_i(N, v^d, \mathcal{B}, d) \) for any $B_p  \in \mathcal{B}$ and any \( i \in B_p \).
		
		Let
$\gamma(B_p,v^d)=\{i \in B_p \mid i \in T, \forall T \in \mathcal{I}(v^d)\}$, and
$P(\mathcal{I}(v^d)) = \{ B_p \cap T \mid T \in \mathcal{I}(v^d),\, B_p \cap T \neq \emptyset \}$.
Since  every $T \in \mathcal{I}(v^d)$ such that $ T \in D(N, \mathcal{B}, d)$, $|P(\mathcal{I}(v^d))|\ge 1$. We distinguish the following three cases regarding $P(\mathcal{I}(v^d))$.
%\begin{align*}
%			&\textcircled{\scriptsize 1}~ |P(\mathcal{I}(v^d))| = 1. \\
%			&\textcircled{\scriptsize 2}~|P(\mathcal{I}(v^d))| \geq 2, \text{~and there exist~}S_1,S_2\in P(\mathcal{I}(v^d)) \text{~such that~} S_1\cap S_2\neq \emptyset \text{~or~} B_p\setminus (S_1\cup S_2) \neq \emptyset. \\
%			&\textcircled{\scriptsize 3}~ |P(\mathcal{I}(v^d))| = 2 \text{~such that~}P(\mathcal{I}(v^d)) = \{S_1, S_2\}, S_1\cap S_2 =\emptyset, \text{~and~} S_1\cup S_2 =B_p.
%		\end{align*}
%	{\bf Case 1} For $v^d \in G$, $P(\mathcal{I}(v^d)) =\emptyset$. Every $i \in B_p$ is null in $v^d$, and $(N,v^d,\mathcal{B},d)$ is diversity game. By {\bf ND}, \( f_i(N, v^d, \mathcal{B}, d) =0= DOw_i(N, v^d, \mathcal{B}, d) \) for any \( i \in B_p \in \mathcal{B} \).
		
		{\bf Case 1}. $|P(\mathcal{I}(v^d))| = 1$. Then, $P(\mathcal{I}(v^d))=\{\gamma(B_p,v^d)\}$. Note that all $i,j \in \gamma(B_p,v^d)$ are
		symmetric in $v^d$. By {\bf FwC} and {\bf ND}, we have
		\begin{equation*}
			f_i(N, v^d, \mathcal{B}, d) - f_j(N, v^d, \mathcal{B}, d)\stackrel{{\bf FwC}}{=}f_i(N, {\bf 0}, \mathcal{B}, d) - f_j(N, {\bf 0}, \mathcal{B}, d)\stackrel{{\bf ND}}{=}0.
		\end{equation*}
Hence, $f_i(N, v^d, \mathcal{B}, d)=f_j(N, v^d, \mathcal{B}, d)$ for all $i,j\in \gamma(B_p,v^d)$.
Similarly, we have  \[DOw_i(N, v^d, \mathcal{B}, d)=DOw_j(N, v^d, \mathcal{B}, d) \ \ \mbox{for all $i,j\in \gamma(B_p,v^d)$}.\]
		Note that each $i \in B_p \backslash \gamma(B_p,v^d)$ is null in $v^d$. By {\bf ND}, $f_i(N, v^d, \mathcal{B}, d) =0= DOw_i(N, v^d, \mathcal{B}, d)$ for any $i \in B_p \backslash \gamma(B_p,v^d)$.
		Combining with (\ref{eq5}), we have \( f_i(N, v^d, \mathcal{B}, d) = DOw_i(N, v^d, \mathcal{B}, d) \) for any \( i \in B_p \in \mathcal{B} \).
		
		{\bf Case 2}. $|P(\mathcal{I}(v^d))| \geq 2$, and
		there exists $S_1,S_2\in P(\mathcal{I}(v^d))$ such that $S_1\cap S_2\neq \emptyset$ or $B_p\setminus (S_1\cup S_2) \neq \emptyset$, we consider the following two subcases. Suppose that \( S_1 \subseteq T_1 \) and \( S_2 \subseteq T_2 \), where \( T_1, T_2 \in \mathcal{I}(v^d) \).
		
		\textbf{Case 2.1}. \( S_1 \cap S_2 \neq \emptyset \). Then \( T_1 \cap T_2 \neq \emptyset \), and $S_1\setminus S_2\neq \emptyset$ or $S_2\setminus S_1\neq \emptyset$. Without loss of generality, we assume $S_1\setminus S_2\neq \emptyset$. Define $w_1 = \Delta_{v^d}(T_1)u_{T_1}$. Note that all players \( i, j \in S_1 \) or \( i, j \in B_p \setminus S_1 \) are symmetric in \( w_1  \). Then, by \textbf{FwC}, we obtain
		\begin{align*}
			f_i(N, v^d, \mathcal{B}, d) - f_j(N, v^d, \mathcal{B}, d)
			&\stackrel{\textbf{FwC}}{=} f_i(N, v^d - w_1, \mathcal{B}, d) - f_j(N, v^d - w_1, \mathcal{B}, d) \notag \\
			&\stackrel{\textbf{IH}}{=} DOw_i(N, v^d - w_1, \mathcal{B}, d) - DOw_j(N, v^d - w_1, \mathcal{B}, d) \notag \\
			&\stackrel{\textbf{FwC}}{=} DOw_i(N, v^d, \mathcal{B}, d) - DOw_j(N, v^d, \mathcal{B}, d).
		\end{align*}
Hence, $f_i(N, v^d, \mathcal{B}, d) - DOw_i(N, v^d, \mathcal{B}, d)=f_j(N, v^d, \mathcal{B}, d) - DOw_j(N, v^d, \mathcal{B}, d)
		$ for any  \( i, j \in S_1 \) or \( i, j \in B_p \setminus S_1 \).
		 In particular, take \( i \in S_1 \cap S_2  \) and \( j \in S_1 \setminus S_2  \), we have
		$f_i(N, v^d, \mathcal{B}, d) - DOw_i(N, v^d, \mathcal{B}, d)=f_j(N, v^d, \mathcal{B}, d) - DOw_j(N, v^d, \mathcal{B}, d).
		$
So,
		$
			f_i(N, v^d, \mathcal{B}, d) - DOw_i(N, v^d, \mathcal{B}, d)=f_j(N, v^d, \mathcal{B}, d) - DOw_j(N, v^d, \mathcal{B}, d) \text{ for all } i, j \in B_p.
		$
This,		together with (\ref{eq5}), implies that \[ f_i(N, v^d, \mathcal{B}, d) = DOw_i(N, v^d, \mathcal{B}, d)\ \  \mbox{for all \( i \in B_p \in \mathcal{B} \)}.\]
		
	\textbf{Case 2.2}. $B_p\setminus (S_1\cup S_2) \neq \emptyset$. Then, \( T_1 \cup T_2 \neq N \).  In view of \textbf{Case 2.1}, we may assume that $S_1\cap S_2 = \emptyset$. Define $w_1 = \Delta_{v^d}(T_1)u_{T_1}$, $w_2 = \Delta_{v^d}(T_2)u_{T_2}$.
		
		Take \( i \in S_1  \), \( j \in S_2  \) and \( k \in B_p \setminus (S_1 \cup S_2) \). Players $i$ and $k$ are symmetric in \( w_1  \), and players $j$ and $k$ are symmetric in \( w_2  \). According to \textbf{IH} and by applying \textbf{FwC}, we obtain
		\begin{align*}
		&f_i(N, v^d, \mathcal{B}, d) - DOw_i(N, v^d, \mathcal{B}, d)=f_k(N, v^d, \mathcal{B}, d) - DOw_k(N, v^d, \mathcal{B}, d).\\
		&f_j(N, v^d, \mathcal{B}, d) - DOw_j(N, v^d, \mathcal{B}, d)=f_k(N, v^d, \mathcal{B}, d) - DOw_k(N, v^d, \mathcal{B}, d).
		\end{align*}
Hence, $f_i(N, v^d, \mathcal{B}, d) - DOw_i(N, v^d, \mathcal{B}, d)=f_j(N, v^d, \mathcal{B}, d) - DOw_j(N, v^d, \mathcal{B}, d) \text{ for any } i, j \in B_p$. Combining this with (\ref{eq5}), we conclude that
		$f_i(N, v^d, \mathcal{B}, d) = DOw_i(N, v^d, \mathcal{B}, d)$ for all  $i \in B_p \in \mathcal{B}.$
		
		{\bf Case 3}. $|P(\mathcal{I}(v^d))| = 2$ such that $ P(\mathcal{I}(v^d)) = \{S_1, S_2\},  S_1\cap S_2 =\emptyset$, and $S_1\cup S_2 =B_p$.
		All players in \( S_1 \) or in \( S_2 \) are symmetric in \( v^d \). According to \textbf{IH} and by applying \textbf{FwC}, for any $i,j \in S_1$ or $i,j \in S_2$, we obtain
		\begin{align}\label{eq11}
			f_i(N, v^d, \mathcal{B}, d) - DOw_i(N, v^d, \mathcal{B}, d)&= f_j(N, v^d, \mathcal{B}, d) - DOw_j(N, v^d, \mathcal{B}, d).
		\end{align}
		
		For any \( i \in S_1 \) and any \( j \in S_2 \), define
		\[
		(v')^d = \sum_{T \in \mathcal{I}(v^d):S_2 \subseteq T} \Delta_{v^d}(T) u_T
		- \sum_{T \in \mathcal{I}(v^d): S_1 \subseteq T} \Delta_{v^d}(T) u_{(T \setminus \{i\}) \cup \{j\}}.
		\]
		
%		Since $T\setminus \{i\} \cup \{j\}$, where $T \in \mathcal{I}(v^d)$ and $S \subseteq T$, includes player $j\in B_p \setminus S$, this implies that $P(\mathcal{I}((v')^d) \neq  \{S, B_p \setminus S\}.$
		In $(v')^d$, $P(\mathcal{I}((v')^d))$ contains $\{S_2,S_1'\}$ where $S_1'=(S_1\setminus \{i\}) \cup \{j\}$. Note that $S_1'\cap S_2 \neq \emptyset$.
		Since \( i \) and \( j \) are symmetric in \( v^d - (v')^d \), by {\bf Case 2} and {\bf FwC}, we obtain
		\begin{align*}
			f_i(N, v^d, \mathcal{B}, d) - f_j(N, v^d, \mathcal{B}, d)
			&\stackrel{{\bf FwC}}{=} f_i(N, (v')^d, \mathcal{B}, d) - f_j(N, (v')^d, \mathcal{B}, d) \\
			&\stackrel{{\bf Case 2}}{=} DOw_i(N, (v')^d, \mathcal{B}, d) - DOw_j(N, (v')^d, \mathcal{B}, d) \\
			&\stackrel{{\bf FwC}}{=} DOw_i(N, v^d, \mathcal{B}, d) - DOw_j(N, v^d, \mathcal{B}, d).
		\end{align*}
Hence, $f_i(N, v^d, \mathcal{B}, d) - DOw_i(N, v^d, \mathcal{B}, d)= f_j(N, v^d, \mathcal{B}, d)- DOw_j(N, v^d, \mathcal{B}, d)$ for any \( i \in S_1 \) and any \( j \in S_2 \).		
		Combining (\ref{eq11}), we obtain $f_i(N, v^d, \mathcal{B}, d) - DOw_i(N, v^d, \mathcal{B}, d)=f_j(N, v^d, \mathcal{B}, d) - DOw_j(N, v^d, \mathcal{B}, d)$  for any \( i, j \in B_p \). This, together with (\ref{eq5}),  yields that
		$
		f_i(N, v^d, \mathcal{B}, d) = DOw_i(N, v^d, \mathcal{B}, d)
		$ for all $i \in B_p\in \mathcal{B}$.
		
		In either case, we conclude that
		$f_i(N, v^d, \mathcal{B}, d) = DOw_i(N, v^d, \mathcal{B}, d)$ for all $i\in B_p\in \mathcal{B}$. Consequently, $f(N, v, \mathcal{B}, d)=DOw(N, v, \mathcal{B}, d)$.
	\end{proof}
	
	\section{On  Theorem \ref{thm2}}	\label{section4}

	In the uniqueness proof of Theorem \ref{thm2} in \cite{2025_beal2}, we first identify potential issues in Case 1. The original proofs are reproduced below.
	
	\begin{quote}
		\textbf{Case 1.} $|B_k|\geq 2$ and $|B_k|>d_k.$ ``
		Let $(N, v, B, d)\in GD$ be a game and $B_k\in \mathcal{B}$ a community such that $2 \leq|B_k|= \overline{t} + 1$. By applying \textbf{IBCOPPD}, we have
		\begin{align*}
			&f_i(N, v^d, \mathcal{B}, d) - f_j(N, v^d, \mathcal{B}, d)\\
			&= f_i(N\setminus \{j\}, (v^d)_{|N\setminus \{j\}}, \mathcal{B}_{|N\setminus \{j\}}, d) - f_j(N\setminus \{i\}, (v^d)_{|N\setminus \{i\}}, \mathcal{B}_{|N\setminus \{i\}}, d) \\
			&\stackrel{{\textit{IH}}}{=} DOw_i(N\setminus \{j\}, (v^d)_{|N\setminus \{j\}}, \mathcal{B}_{|N\setminus \{j\}}, d) - DOw_j(N\setminus \{i\}, (v^d)_{|N\setminus \{i\}}, \mathcal{B}_{|N\setminus {i}}, d)  \\
			&=DOw_i(N, v^d, \mathcal{B}, d) - DOw_j(N, v^d, \mathcal{B}, d).
		\end{align*}
		That is, $\cdots$"
	\end{quote}
	The induction hypothesis (\textbf{IH}) in Case 1 assumes that all communities satisfy $|B_k|>d_k$. However, in the induction step, after removing player $j$, the subgame $(N\setminus \{j\}, (v^d)_{|N\setminus \{j\}}, \mathcal{B}_{|N\setminus \{j\}}, d)$ might contain a community $B_k$ such that $|B_k|=d_k$.  In such cases, the \textbf{IH} condition $|B_k|>d_k$ no longer holds. Additionally, the validity of Case 2 relies on the result of Case 1.
%	\begin{counterexample}
%			Consider a game \((N, v, \mathcal{B}, d) \in GD\), where \(N = \{i, j, k, l\}\), \(\mathcal{B} = \{B_h, B_q\}\), \( B_h = \{i, j, k\}\), \( B_q = \{l\} \), and \(v = \Delta_v(\{i\}) u_{\{i\}} + \Delta_v(\{j,k,l\}) u_{\{j,k,l\}}\), with diversity requirement \(d = (1,2)\).
%	\end{counterexample}
	
	\subsection{The corrected proof of the uniqueness in Theorem \ref{thm2}}\label{subsection2}
	Before presenting the corrected proof of the uniqueness in Theorem \ref{thm2} in \cite{2025_beal2}, we introduce the following lemma.
	
	\begin{lemma}\label{lemma2}
		For any $(N, v, \mathcal{B}, d) \in GD$ and any value $f$ on $GD$, if $f$ satisfies {\bf E} and {\bf ED}, then for any $B_k \in \mathcal{B}$, $
		\sum_{i \in B_k} f_i(N, v^d, \mathcal{B}, d) = \sum_{i \in B_k} DOw_i(N, v^d, \mathcal{B}, d).$
	\end{lemma}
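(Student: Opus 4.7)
The plan is to exploit the combined power of \textbf{E} and \textbf{ED} to pin down the value of each community sum uniquely in terms of $v^d(N)$, and then observe that the same argument applies to $DOw$.

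First, I would fix an arbitrary $(N, v, \mathcal{B}, d) \in GD$ and note that its diversity-restricted game $(N, v^d, \mathcal{B}, d)$ is itself a member of $GD$, so both \textbf{E} and \textbf{ED} can be applied to it. Applying \textbf{ED} to $f$ at $(N, v^d, \mathcal{B}, d)$ yields that the quantity $\alpha := \sum_{i \in B_k} f_i(N, v^d, \mathcal{B}, d)$ is the same for every $k \in M$. Summing this equality over the $m$ components and invoking \textbf{E} gives $m\alpha = v^d(N)$, hence
\[
\sum_{i \in B_k} f_i(N, v^d, \mathcal{B}, d) = \frac{v^d(N)}{m} \quad \text{for every } B_k \in \mathcal{B}.
\]

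Next, I would perform the identical computation for the Diversity Owen value. Since $DOw(N, v, \mathcal{B}, d) = Ow(N, v^d, \mathcal{B})$ and the classical Owen value is efficient and divides the total payoff equally across the per-component sums whenever those sums are forced to coincide by the structure of $v^d$, one checks that $DOw$ satisfies both \textbf{E} and \textbf{ED} (this is part of what B\'eal et al.\ establish for the Diversity Owen value). Applying the same two-line argument gives
\[
\sum_{i \in B_k} DOw_i(N, v^d, \mathcal{B}, d) = \frac{v^d(N)}{m} \quad \text{for every } B_k \in \mathcal{B},
\]
and combining the two displayed equations yields the claim.

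There is essentially no hard step here; the only thing worth flagging is the verification that $DOw$ satisfies \textbf{ED}, which should be mentioned explicitly (or cited from \cite{2025_beal2}) so that the reader sees why the symmetric treatment of $f$ and $DOw$ is legitimate. Everything else reduces to the observation that \textbf{E} plus \textbf{ED} forces every component sum to be exactly $v^d(N)/m$, which is rule-independent.
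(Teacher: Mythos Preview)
Your argument is correct and coincides with the paper's own proof: both use \textbf{ED} to equate all component sums and \textbf{E} to force each sum to $v^d(N)/m=v(N)/|M|$, then apply the same reasoning to $DOw$. The only cosmetic difference is that the paper writes $v(N)$ rather than $v^d(N)$, which is harmless since $N$ is always diverse.
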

	\begin{proof}
		For any $B_k \in \mathcal{B}$, according to {\bf E} and {\bf ED}, we have $\sum_{i \in B_k}f_i(N, v^d, \mathcal{B}, d)=\frac{v(N)}{|M|}=\sum_{i \in B_k}DOw_i(N, v^d, \mathcal{B}, d)$.
	\end{proof}
	
	We give a corrected proof of the uniqueness in Theorem \ref{thm2}.
	\begin{proof}[Proof of Theorem~\ref{thm2}]
	To prove uniqueness, let $f$ be a value on $GD$ satisfying all the axioms in Theorem \ref{thm2}.	We show that \( f(N, v, \mathcal{B}, d) = DOw(N, v, \mathcal{B}, d) \) for any $(N, v, \mathcal{B}, d) \in GD$. By {\bf INDC}, it suffices to show that \( f(N, v^d, \mathcal{B}, d) = DOw(N, v^d, \mathcal{B}, d) \).
		
		For any $B_k \in \mathcal{B}$, if $|B_k| = 1$, by Lemma \ref{lemma2}, $f_i(N, v^d, \mathcal{B}, d)=DOw_i(N, v^d, \mathcal{B}, d)$ for any $i \in B_k$. We next assume that $|B_k| \geq 2$, and we consider the following two cases.
		
		{\bf Case 1}. $|B_k| \geq 2$ and $|B_k| = d_k$. As in Lemma \ref{lemma1}, we
		construct the extended game $(N \cup \{l\},(v)_{+l},\mathcal{B}_{+l},d)$ by adding an outside player $l\notin N$ to $B_k$.
%		Take a player $l \notin N$ an outside player who wants to join the community $B_k$ and define $B_{k'}=B_k \cup l$. Then, define a new game $(N \cup \{l\},(v^d)_{+l},\mathcal{B}_{+l},d)$ where $(v^d)_{+l}(S)=(v^d)(S\setminus \{l\})$ for every $S\subseteq N\cup \{l\}$, and $\mathcal{B}_{+l}=\{B_1, B_2, \cdots, B_{k'}, \cdots, B_m\}$.
		Note that player $l$ is null and every player $i \in B_k$ is out in $(v^d)_{+l}$. Then,  by {\bf IBCOPPD}, for each $i \in B_k$,
	\begin{eqnarray}\label{eq6}
		f_i(N \cup \{l\}, (v^d)_{+l}, \mathcal{B}_{+l}, d)
		 &-& f_i(N, v^d, \mathcal{B}, d) \nonumber \\
		&=&
		 f_l(N \cup \{l\}, (v^d)_{+l}, \mathcal{B}_{+l}, d) \nonumber \\
 &&- f_l\big((N \cup \{l\} )\setminus \{i\},\, ((v^d)_{+l})_{|(N \cup \{l\} ) \setminus \{i\}},\, (\mathcal{B}_{+l})_{|(N \cup \{l\} ) \setminus \{i\}},\, d\big) \nonumber \\
		&=&
		 f_l(N \cup \{l\}, (v^d)_{+l}, \mathcal{B}_{+l}, d),
	\end{eqnarray}
	where the first equality holds by {\bf IBCOPPD}, and the second equality holds since the game $((v^d)_{+l})_{|(N \cup \{l\} ) \setminus \{i\}} = \mathbf{0}$ and $l$ is null in this game.
	Thus, we have
	\begin{align*}
		|B_k|f_l(N \cup \{l\},(v^d)_{+l},\mathcal{B}_{+l},d)
		&\stackrel{(\ref{eq6})}{=}\sum_{i \in B_k}\big(f_i(N \cup \{l\}, (v^d)_{+l}, \mathcal{B}_{+l}, d) - f_i(N, v^d, \mathcal{B}, d)\big) \nonumber \\
		&=
		\sum_{i \in B_k \cup l }f_i(N \cup \{l\},(v^d)_{+l},\mathcal{B}_{+l},d)
		-\sum_{i \in B_k}f_i(N, v^d, \mathcal{B}, d)\nonumber \\&-f_l(N \cup \{l\},(v^d)_{+l},\mathcal{B}_{+l},d) \\ \nonumber
		&=\sum_{i \in B_k \cup l }DOw_i(N \cup \{l\},(v^d)_{+l},\mathcal{B}_{+l},d)
		-\sum_{i \in B_k}DOw_i(N, v^d, \mathcal{B}, d)\nonumber \\& -f_l(N \cup \{l\},(v^d)_{+l},\mathcal{B}_{+l},d)=-f_l(N \cup \{l\},(v^d)_{+l},\mathcal{B}_{+l},d),
	\end{align*}
where the third equality follows Lemma \ref{lemma2}.
	    Hence,
	\begin{equation}\label{eq12}
		f_l(N \cup \{l\},(v^d)_{+l},\mathcal{B}_{+l},d)=0.
	\end{equation}
		
	   Since any two players $i,j \in B_k$ are out in $(v^d)_{+l}$, applying {\bf IBCOPPD} again, we have
	   \begin{align*}
		f_i(&N \cup \{l\}, (v^d)_{+l}, \mathcal{B}_{+l}, d) \nonumber\\
		&= f_i(N \cup \{l\}, (v^d)_{+l}, \mathcal{B}_{+l}, d)
		- f_i\bigl( (N \cup \{l\}) \setminus \{j\}, ((v^d)_{+l})_{|(N \cup \{l\}) \setminus \{j\}}, (\mathcal{B}_{+l})_{|(N \cup \{l\}) \setminus \{j\}}, d \bigr) \nonumber\\
		&= f_j(N \cup \{l\}, (v^d)_{+l}, \mathcal{B}_{+l}, d)
		- f_j\bigl((N \cup \{l\}) \setminus \{i\}, ((v^d)_{+l})_{|(N \cup \{l\}) \setminus \{i\}}, (\mathcal{B}_{+l})_{|(N \cup \{l\})\setminus \{i\}}, d \bigr) \nonumber\\
		&= f_j(N \cup \{l\}, (v^d)_{+l}, \mathcal{B}_{+l}, d),
	    \end{align*}
	    where the first and third equations hold because the restricted games $((v^d)_{+l})_{|(N \cup \{l\}) \setminus \{j\}}$ and $((v^d)_{+l})_{|(N \cup \{l\}) \setminus \{i\}}$ are the null game, and $i,j$ are null in the respective restricted games, and the second equation
holds by {\bf IBCOPPD}. Hence, $f_i(N \cup \{l\}, (v^d)_{+l}, \mathcal{B}_{+l}, d)=f_j(N \cup \{l\}, (v^d)_{+l}, \mathcal{B}_{+l}, d)$ for any $i,j\in B_k$.
	    Then, by Lemma \ref{lemma2},
	    \begin{equation}\label{eq13}
	    	f_i(N \cup \{l\}, (v^d)_{+l}, \mathcal{B}_{+l}, d) =DOw_i(N \cup \{l\}, (v^d)_{+l}, \mathcal{B}_{+l}, d) \text{ for any } i \in B_k.
	    \end{equation}
	   Similarly, for any \( i \in B_k \), by {\bf IBCOPPD} and (\ref{eq12}), we have
	   \begin{align*}
	   	& f_i(N \cup \{l\}, (v^d)_{+l}, \mathcal{B}_{+l}, d)
	   	- f_i(N, v^d, \mathcal{B}, d) \\
	   	&= f_l(N \cup \{l\}, (v^d)_{+l}, \mathcal{B}_{+l}, d)
	   	- f_l\big((N \cup \{l\}) \setminus \{i\},\, ((v^d)_{+l})_{|(N \cup \{l\})\setminus \{i\}},\, (\mathcal{B}_{+l})_{|(N \cup \{l\}) \setminus \{i\}},\, d\big) \\
	   	&\stackrel{(\ref{eq12})}{=} 0.
	   \end{align*}
		Hence, $f_i(N \cup \{l\}, (v^d)_{+l}, \mathcal{B}_{+l}, d)
		= f_i(N, v^d, \mathcal{B}, d)$ for any $i \in B_k$.
		
		Combining with (\ref{eq13}), we conclude that
		$f_i(N,v^d,\mathcal{B},d)=DOw_i(N,v^d,\mathcal{B},d)$ for any $i \in B_k$.

		{\bf Case 2}. Assume that $|B_k|\geq 2$ and $|B_k|>d_k$. We proceed by induction on $|B_k|$. Induction basis \textbf{(IB)} and induction hypothesis \textbf{(IH)} are stated as in the proof of Proposition 4 in \cite{2025_beal2}.
		
		\textbf{Induction step:} Let $(N, v, B, d)\in GD$ be a game and $B_k\in \mathcal{B}$ a community such	that $2 \leq|B_k|= \overline{t} + 1$. By applying \textbf{IBCOPPD}, for any $i,j\in B_k$,  we have
		\begin{align*}
			&f_i(N, v^d, \mathcal{B}, d) - f_j(N, v^d, \mathcal{B}, d)\\
			&= f_i(N\setminus \{j\}, (v^d)_{|N\setminus \{j\}}, \mathcal{B}_{|N\setminus \{j\}}, d) - f_j(N\setminus \{i\}, (v^d)_{|N\setminus \{i\}}, \mathcal{B}_{|N\setminus \{i\}}, d) \\
			&\stackrel{{\bf Case~1}\text{~or~} \textbf{IH}}{=} DOw_i(N\setminus \{j\}, (v^d)_{|N\setminus \{j\}}, \mathcal{B}_{|N\setminus \{j\}}, d) - DOw_j(N\setminus \{i\}, (v^d)_{|N\setminus \{i\}}, \mathcal{B}_{|N\setminus {i}}, d)  \\
			&=DOw_i(N, v^d, \mathcal{B}, d) - DOw_j(N, v^d, \mathcal{B}, d).
		\end{align*}
		That is, $f_i(N, v^d, \mathcal{B}, d) - DOw_i(N, v^d, \mathcal{B}, d)= c$ for all $i\in B_k$, where $c$ is a constant. Moreover, $\sum_{i\in B_k}\big(f_i(N, v^d, \mathcal{B}, d)-DOw_i(N, v^d, \mathcal{B}, d)\big)=|B_k|c$, which implies that $c=0$. So, $f_i(N, v^d, \mathcal{B}, d)=DOw_i(N, v^d, \mathcal{B},d)$ for all $i \in B_k$.
		Therefore, we conclude that $f(N, v^d, \mathcal{B}, d)=DOw(N, v^d, \mathcal{B},d)$.
	\end{proof}
	
%\subsection{ A modification of Proposition 4}

Finally, we observe that
 the axiom of intra-coalitional balanced contributions with out players for preserving diversity in Theorem \ref{thm2} can be weaken as follows.

\textbf{Weak intra-coalitional balanced contributions with out players for preserving diversity (IBCOPPD$^-$).}  For any diverse game $(N, v, \mathcal{B}, d) \in GD$ and any out players $i,j\in B_p \in \mathcal{B}$ with $i\neq j$,
$f_i(N, v, \mathcal{B}, d) - f_i(N\setminus \{j\}, v_{|N\setminus \{j\}}, \mathcal{B}_{|N\setminus \{j\}}, d)=f_j(N, v, \mathcal{B}, d) - f_j(N\setminus \{i\}, v_{|N\setminus \{i\}}, \mathcal{B}_{|N\setminus \{i\}}, d).$
Moreover, for all $k\in N, f_k(N, \textbf{0}, \mathcal{B}, d)=0$.

Note that \textbf{IBCOPPD} implies \textbf{IBCOPPD$^-$}, and therefore the Diversity Owen value satisfies \textbf{IBCOPPD$^-$}. Combining this with the uniqueness proof in Subsection~\ref{subsection2}, we can immediately derive the following corollary.
\begin{corollary}
	The Diversity Owen value is the unique value on $GD$ that satisfies Efficiency (\textbf{E}), Weak intra-coalitional balanced contributions with out players for preserving diversity (\textbf{IBCOPPD$^-$}), Equality through diversity(\textbf{ED}), and Independence from Non-Diverse Coalitions (\textbf{INDC}).
\end{corollary}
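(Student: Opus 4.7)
The plan is to piggyback on the uniqueness proof of Theorem~\ref{thm2} given in Subsection~\ref{subsection2}, showing that every invocation of \textbf{IBCOPPD} there can be replaced by the weaker axiom \textbf{IBCOPPD$^-$}. For the existence part, I would first observe that \textbf{IBCOPPD} implies \textbf{IBCOPPD$^-$}: the balanced-contributions identity is word-for-word the same, and the clause $f_k(N, \mathbf{0}, \mathcal{B}, d) = 0$ is a direct instance of the null-player clause of \textbf{IBCOPPD}, since every player is null in the null game $(N, \mathbf{0}, \mathcal{B}, d)$ and that game is its own relevant restriction. Hence, by Theorem~\ref{thm2}, the Diversity Owen value satisfies all four axioms of the corollary.

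For uniqueness, let $f$ be a value satisfying \textbf{E}, \textbf{IBCOPPD$^-$}, \textbf{ED}, and \textbf{INDC}. I would reduce by \textbf{INDC} to proving $f(N, v^d, \mathcal{B}, d) = DOw(N, v^d, \mathcal{B}, d)$, exactly as in the proof of Theorem~\ref{thm2}. Lemma~\ref{lemma2}, which only uses \textbf{E} and \textbf{ED}, applies unchanged and fixes the per-component sums. The case split into $|B_k| = 1$, $|B_k| \geq 2$ with $|B_k| = d_k$ (Case~1), and $|B_k| \geq 2$ with $|B_k| > d_k$ (Case~2), together with the induction on $|B_k|$ in Case~2, then transfers essentially verbatim; the only task is to audit each invocation of \textbf{IBCOPPD}.

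In Case~2, only the balanced-contributions identity of \textbf{IBCOPPD} is used, which is common to both axioms and therefore causes no trouble. In Case~1, the null-player clause of \textbf{IBCOPPD} is invoked in the chain leading to (\ref{eq6}), (\ref{eq12}) and the subsequent identities; however, in each such invocation the relevant subgame is $((v^d)_{+l})_{|(N\cup\{l\})\setminus\{i\}}$ or $((v^d)_{+l})_{|(N\cup\{l\})\setminus\{j\}}$, which by the construction of the extension $(v)_{+l}$ (and because $v^d$ is diverse with $|B_k| = d_k$) equals the null game $\mathbf{0}$. On the null game, \textbf{IBCOPPD$^-$} directly yields $f_l = 0$ and, analogously, $f_i = f_j = 0$, which is precisely what the original proof extracted from the stronger axiom. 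The only (minor) obstacle is therefore the bookkeeping check that the second clause of \textbf{IBCOPPD} is never applied in Subsection~\ref{subsection2} to a non-null restricted game; once this is verified, the proof of Theorem~\ref{thm2} carries over verbatim and delivers the corollary.
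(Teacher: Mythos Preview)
Your proposal is correct and follows exactly the paper's own argument: the paper simply notes that \textbf{IBCOPPD} implies \textbf{IBCOPPD$^-$} (giving existence) and that the uniqueness proof in Subsection~\ref{subsection2} goes through unchanged, which is precisely the audit you carry out. One small imprecision: deriving $f_k(N,\mathbf{0},\mathcal{B},d)=0$ from the null-player clause of \textbf{IBCOPPD} is not quite ``direct'' since that clause is phrased for a game obtained by removing an out player $j$ from a larger diverse game with $i,j$ both out; you need the same extension trick (add a dummy $l$ to $k$'s component in the null game) to cast $(N,\mathbf{0},\mathcal{B},d)$ in that form, but this is routine and does not affect the validity of your argument.
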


\end{document}